\newtheorem{theorem}{\indent Theorem}[section]
\newtheorem{lemma}[theorem]{\indent Lemma}
\newtheorem{corollary}[theorem]{\indent Corollary}
\newtheorem{proposition}[theorem]{\indent Proposition}
\newtheorem{EXAMPLE}{\indent Example}[section]
\newtheorem{definition}{\indent Definition}[section]
\newenvironment{example}{\begin{EXAMPLE}\rm}{\rm\end{EXAMPLE}}
\newcommand{\define}{\stackrel{\mbox{\tiny $\triangle$}}{=}}
\newcommand{\AWGNC}{\mbox{\tiny AWGNC}}
\newcommand{\BSC}{\mbox{\tiny BSC}}
\newcommand{\BEC}{\mbox{\tiny BEC}}
\newcommand{\maxfrac}{\mbox{\tiny max-frac}}
\newcommand{\dual}{^{\perp}}
\newcommand{\code}{\mathcal{C}}
\newcommand{\cL}{{\mbox{\boldmath $L$}}}
\newcommand{\cV}{{\mathcal{V}}}
\newcommand{\cHsmall}{{\mbox{\scriptsize \boldmath $H$}}}
\newcommand{\cH}{{\mbox{\boldmath $H$}}}
\newcommand{\cI}{{\mathcal{I}}}
\newcommand{\cJ}{{\mathcal{J}}}
\newcommand{\cK}{{\mathcal{K}}}
\newcommand{\weight}{{\mathsf{w}}}
\newcommand{\ff}{{\mathbb{F}}}
\newcommand\rr{{\mathbb R}}
\newcommand\nn{{\mathbb N}}
\newcommand{\bldc}{{\mbox{\boldmath $c$}}}
\newcommand{\bldx}{{\mbox{\boldmath $x$}}}
\newcommand{\bldxx}{{\mbox{\scriptsize \boldmath $x$}}}
\newcommand{\smallzeros}{{\mbox{\scriptsize \boldmath $0$}}}
\newcommand{\bldzero}{{\mbox{\boldmath $0$}}}
\newcommand{\entropy}{\mathsf{H}}
    \def\squarebox#1{\hbox to #1{\hfill\vbox to #1{\vfill}}}
\newcommand{\mat}[1]{\left[\ \ \begin{matrix}#1\end{matrix}\;\ \right]}
\newcommand{\zo}{\mbox{\footnotesize 0 \normalsize}\!\!}
\newcommand{\ze}{\mbox{\footnotesize $\mathbf{1}$ \normalsize}\!\!}
\newlength{\Algwidth}
\title{On the Pseudocodeword Redundancy}
\author{Jens Zumbr\"agel$^*$, Mark F. Flanagan$^*$, and Vitaly
  Skachek$^\dagger$\medskip\\%
  {\normalsize $^*$Claude Shannon Institute, University College
    Dublin, Belfield, Dublin 4, Ireland}\\
  {\normalsize
    (e-mail: jens.zumbragel@ucd.ie, mark.flanagan@ieee.org).}\\
  {\normalsize $^\dagger$Division of Mathematical Sciences, School
    of Physical and Mathematical Sciences, Nanyang Technological
    University,}\\
  {\normalsize 21 Nanyang Link, 637371 Singapore (e-mail:
    vitaly.skachek@ntu.edu.sg). }  }
\begin{document}


\maketitle

\begin{abstract}
  We define the AWGNC, BSC, and max-fractional \emph{pseudocodeword
    redundancy} $\rho(\code)$ of a code $\code$ as the smallest number
  of rows in a parity-check matrix such that the corresponding minimum
  pseudoweight is equal to the minimum Hamming distance of $\code$. We
  show that most codes do not have a finite $\rho(\code)$. We also
  provide bounds on the pseudocodeword redundancy for some families of
  codes, including codes based on designs.
\end{abstract}


\section{Introduction}

Pseudocodewords represent the intrinsic mechanism of failure of
binary linear codes under linear-programming (LP) or message-passing
(MP) decoding.  In~\cite{FKKR}, the pseudocodeword effective Euclidean
weight, or \emph{pseudoweight}, was associated with any
pseudocodeword. This concept of pseudoweight was shown to play an
analogous role to that of the signal Euclidean distance (AWGNC) or
Hamming distance (BSC) in the ML decoding scenario. The minimum
pseudoweight of the code $\code$ with respect to a parity-check matrix
$\cH$ is defined as the minimum over all pseudoweights of nonzero
pseudocodewords; this may be considered as a first-order measure of
decoder error-correcting performance for LP or MP decoding. Typically,
a lower minimum pseudoweight corresponds to a higher probability of
decoding error. Another measure closely related to BSC pseudoweight is
the max-fractional weight (pseudoweight).  It serves as a lower bound
on both AWGNC and BSC pseudoweights.

In order to minimise the decoding error probability under LP (or MP)
decoding, one might want to select a matrix $\cH$ which maximises the
minimum pseudoweight of the code for the given channel.  However,
generally it is not clear how this goal may be achieved. Adding
redundant rows to the parity-check matrix introduces additional
constraints on the so-called \emph{fundamental cone}, and may thus
increase the minimum pseudoweight. However, such additions increase
the decoding complexity under MP decoding, especially since linear
combinations of low-density rows may not yield a low-density result.
On the other hand, there exist classes of codes for which sparse
parity-check matrices exist with many redundant rows,
e.g. \cite{kou_lin_fossorier}.

For the AWGNC, BEC, BSC pseudoweights, and max-fractional weight,
define $\rho_{\AWGNC}(\code)$, $\rho_{\BEC}(\code)$,
$\rho_{\BSC}(\code)$, and $\rho_{\maxfrac}(\code)$, respectively, to be
the minimum number of rows in any parity-check matrix $\cH$ such that
the minimum pseudoweight of $\code$ with respect to this matrix is
equal to the code's minimum distance $D$. For the sake of simplicity,
we sometimes use the notation $\rho(\code)$ when the type of channel
is clear from the context.  The value $\rho(\code)$ is called the
(AWGNC, BEC, BSC, max-fractional) \emph{pseudocodeword redundancy} (or
pseudoredundancy) of $\code$. If for the code $\code$ there exists no
such matrix $\cH$, we say that the pseudoredundancy is infinite.

The BEC pseudocodeword redundancy was studied
in~\cite{Schwartz-Vardy}, where it was shown that for any binary
linear code $\code$ there exists a parity-check matrix $\cH$ such that
the minimum pseudoweight with respect to this $\cH$ is equal to $D$,
and therefore the BEC pseudocodeword redundancy is finite for all
codes. The authors also presented some bounds on $\rho_{\BEC}(\code)$
for general linear codes, and for some specific families of codes.

In this work, we address the analogous problem for the AWGNC, BSC, and
max-fractional pseudoweight. We show that for most codes there exists
no $\cH$ such that the minimum pseudoweight (with respect to $\cH$) is
equal to $D$, and therefore the AWGNC, BSC, and max-fractional
pseudocodeword redundancy (as defined above) is infinite for most
codes. For some code families for which the pseudoredundancy is
finite, we provide upper bounds on its value.


\section{General Settings}

Let $\code$ be a code of length $n \in \nn$ over the binary field
$\ff_2$, defined by
\begin{equation}\label{eq:code_definition}
  \code = \ker \cH
  = \{ \bldc \in \ff_2^n \; : \; \cH \bldc^T = \bldzero^T \}
\end{equation}
where $\cH$ is an $m \times n$ \emph{parity-check matrix} of the code
$\code$. Obviously, the code $\code$ may admit more than one
parity-check matrix, and all the codewords form a linear vector space
of dimension $k \ge n-m$. We say that $k$ is the \emph{dimension} of
the code $\code$.  The \emph{rate} of the code $\code$ is defined as
$R(\code) = k/n$ and is equal to the number of information bits per
coded bit. We denote by $D$ the minimum Hamming distance (also called
the minimum distance) of $\code$. The code $\code$ may then be
referred to as an $[n,k,D]$ linear code over $\ff_2$.

Denote the set of column indices and the set of row indices of $\cH$
by $\cI = \{1, 2, \dots, n \}$ and $\cJ = \{1, 2, \dots, m \}$,
respectively. For $j \in \cJ$, we denote $\cI_j \define \{ i \in \cI
\; : \; H_{j,i} \neq 0 \}$, and for $i \in \cI$, we denote $\cJ_i
\define \{ j \in \cJ \; : \; H_{j,i} \neq 0 \}$. The \emph{fundamental
  cone} of $\cH$, denoted $\cK(\cH)$, is defined as the set of vectors
$\bldx \in \rr^n$ that satisfy
\begin{equation}\label{eq:polytope-inequality-1}
  \forall j \in \cJ, \; \forall \ell \in \cI_j  \; : 
  \; x_\ell \le \sum_{i \in \cI_j \backslash \{ \ell \}} x_i \; ,
\end{equation}
\begin{equation}\label{eq:polytope-inequality-2}
  \forall i \in \cI \; : \; x_i \ge 0 \; .
\end{equation}

The vectors $\bldx\in\rr^n$
satisfying~(\ref{eq:polytope-inequality-1})
and~(\ref{eq:polytope-inequality-2}) are called \emph{pseudocodewords}
of $\code$ with respect to the parity-check matrix $\cH$.  Note that
the fundamental cone $\cK(\cH)$ depends on the parity-check matrix
$\cH$ rather than on the code $\code$ itself.  At the same time, the
fundamental cone is independent of the underlying communication
channel.

The BEC, AWGNC, BSC pseudoweights, and max-fractional weight of a
nonzero pseudocodeword $\bldx \in \cK(\cH)$ were defined
in~\cite{FKKR} and~\cite{KV-long-paper} as follows:
\begin{align*}
  \weight_{\BEC} (\bldx) & \,\define\, 
  \left| \mbox{supp} ( \bldx ) \right| \; , \\
  \weight_{\AWGNC} (\bldx) & \,\define\,  
  \frac{\left( \sum_{i \in \cI} x_i \right)^2}{\sum_{i \in \cI} x_i^2}\;.
\end{align*}
Let $\bldx'$ be a vector in $\rr^n$ with the same components as
$\bldx$ but in non-increasing order.  For $i-1 < \xi \le i$, where $1
\le i \le n$, let $\phi(\xi) \stackrel{\triangle}{=} x'_i$. Define
$\Phi(\xi) \define \int_{0}^{\xi} \phi(\xi') \; d \xi'$ and \[
\weight_{\BSC}(\bldx) \define 2\, \Phi^{-1} ( \Phi(n)/2 ) \; . \] 
Finally, the max-fractional weight of $\bldx$ is defined as
\[\weight_{\maxfrac} (\bldx) \,\define\,
\frac{\sum_{i \in \cI} x_i}{\max_{i \in \cI} x_i} \; .\]

We define the BEC \emph{minimum pseudoweight} of the code $\code$ with
respect to the parity-check matrix $\cH$ as
\[ \weight_{\BEC}^{\min} (\cH) \,\define\, \min_{\bldxx \in \cK(\cHsmall)
  \backslash \{ \smallzeros \} } \weight_{\BEC} (\bldx) \; .\] The
quantities $\weight_{\AWGNC}^{\min} (\cH) $, $\weight_{\BSC}^{\min}
(\cH) $ and $\weight_{\maxfrac}^{\min} (\cH)$ are defined
similarly. Note that all four minimum pseudoweights are upper bounded
by $D$, the code's minimum distance.

Then we define the BEC \emph{pseudocodeword redundancy} of the code
$\code$ as
\[ \rho_{\BEC}(\code) \,\define\, \inf\{\#\text{rows}(\cH) \mid
\ker\cH=\code\,,\, \weight_{\BEC}^{\min}(\cH)=D\} \: ,\] where
$\inf\varnothing\define\infty$, and similarly we define the
pseudocodeword redundancies $\rho_{\AWGNC}(\code)$,
$\rho_{\BSC}(\code)$ and $\rho_{\maxfrac}(\code)$ for the AWGNC and
BSC pseudoweights, and the max-fractional weight.  We remark that all
pseudocodeword redundancies satisfy $\rho(\code) \ge n-k$.


\section{Basic Connections} 

The next lemma is taken from~\cite{KV-long-paper}.

\begin{lemma}\label{lemma:relations}
  Let $\code$ be a binary linear code with the parity-check matrix
  $\cH$.  Then,
  \begin{eqnarray*}
    &\weight_{\maxfrac}^{\min} (\cH) \; \le \; 
    \weight_{\AWGNC}^{\min} (\cH) \; \le \; \weight_{\BEC}^{\min} (\cH) \; , \\
    &\weight_{\maxfrac}^{\min} (\cH) \; \le \; 
    \weight_{\BSC}^{\min} (\cH) \; \le \; \weight_{\BEC}^{\min} (\cH) \; . 
  \end{eqnarray*}
\end{lemma}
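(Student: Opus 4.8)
The plan is to prove everything pointwise and then minimise. Concretely, I would first show that for every nonzero pseudocodeword $\bldx \in \cK(\cH)$ one has $\weight_{\maxfrac}(\bldx) \le \weight_{\AWGNC}(\bldx) \le \weight_{\BEC}(\bldx)$ and $\weight_{\maxfrac}(\bldx) \le \weight_{\BSC}(\bldx) \le \weight_{\BEC}(\bldx)$; then each such pointwise inequality $\weight_1(\bldx) \le \weight_2(\bldx)$ yields $\weight_1^{\min}(\cH) \le \weight_1(\bldx) \le \weight_2(\bldx)$ for all nonzero $\bldx$, and minimising the right-hand side over $\bldx \in \cK(\cH)\setminus\{\zeros\}$ gives $\weight_1^{\min}(\cH) \le \weight_2^{\min}(\cH)$. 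Since a pseudocodeword lies in the fundamental cone, all its entries are nonnegative, and since each of the four weight functions is invariant under permutations of the coordinates, I may assume throughout that $x_1 \ge x_2 \ge \cdots \ge x_n \ge 0$ with $x_1 > 0$. I abbreviate $s = \sum_i x_i > 0$, $M = x_1 = \max_i x_i$, and $m = |\mathrm{supp}(\bldx)|$; then $\weight_{\BEC}(\bldx) = m$ and $\weight_{\maxfrac}(\bldx) = s/M$, and, with $\phi$ and $\Phi$ as in the definition, $\Phi$ is continuous and piecewise linear, strictly increasing on $[0,m]$ (because $\phi > 0$ there) and constant on $[m,n]$, with $\Phi(m) = \Phi(n) = s$.

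For the two inequalities involving $\weight_{\AWGNC}$ the only ingredients are the bounds $x_i \le M$ and the Cauchy--Schwarz inequality. From $x_i \le M$ I get $\sum_i x_i^2 \le M\sum_i x_i = Ms$, so $\weight_{\AWGNC}(\bldx) = s^2/\sum_i x_i^2 \ge s^2/(Ms) = s/M = \weight_{\maxfrac}(\bldx)$; and Cauchy--Schwarz applied to $(x_i)_{i \in \mathrm{supp}(\bldx)}$ against the all-ones vector of the same length gives $s^2 \le m\sum_i x_i^2$, so $\weight_{\AWGNC}(\bldx) \le m = \weight_{\BEC}(\bldx)$.

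For the inequalities involving $\weight_{\BSC}$, set $\xi_0 := \Phi^{-1}(s/2)$, which is the unique point of $[0,m]$ with $\Phi(\xi_0) = s/2$, so that $\weight_{\BSC}(\bldx) = 2\xi_0$. Since $\phi(\xi') \le M$ for all $\xi'$, I have $s/2 = \int_0^{\xi_0}\phi(\xi')\,d\xi' \le M\xi_0$, hence $2\xi_0 \ge s/M$, i.e.\ $\weight_{\BSC}(\bldx) \ge \weight_{\maxfrac}(\bldx)$. For $\weight_{\BSC}(\bldx) \le \weight_{\BEC}(\bldx)$ it suffices, by monotonicity of $\Phi$ on $[0,m]$, to show $\Phi(m/2) \ge s/2$, since then $\xi_0 \le m/2$ and $\weight_{\BSC}(\bldx) = 2\xi_0 \le m$. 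But $\phi$ is non-increasing, so its average over $[0,m/2]$ is at least its average over the adjacent, equally long interval $[m/2,m]$, i.e.\ $\int_0^{m/2}\phi \ge \int_{m/2}^m\phi$; as these two integrals sum to $\Phi(m) = s$, this gives $\Phi(m/2) = \int_0^{m/2}\phi \ge s/2$.

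I expect the only slightly delicate point to be the handling of $\Phi$: one needs to confirm that $\Phi$ is strictly increasing on the ``support interval'' $[0,m]$ so that $\xi_0 = \Phi^{-1}(s/2)$ is unambiguously defined there and lies in $[0,m]$, and that this monotonicity legitimises the equivalence ``$2\xi_0 \le m$ iff $\Phi(m/2) \ge s/2$''. Once that is in place, the remaining steps are just Cauchy--Schwarz, the bound $x_i \le M$, and the averaging inequality for the non-increasing step function $\phi$; combining the four pointwise inequalities with the minimisation argument from the first paragraph then completes the proof.
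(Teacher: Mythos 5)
Your proof is correct. Note that the paper itself gives no proof of this lemma --- it is quoted directly from the reference \cite{KV-long-paper} --- and your pointwise argument (the entrywise bound $x_i \le M$, Cauchy--Schwarz on the support, and the averaging inequality for the non-increasing step function $\phi$, followed by minimisation over the fundamental cone) is exactly the standard derivation given there; the one delicate point you flag, the strict monotonicity of $\Phi$ on $[0,m]$ making $\Phi^{-1}(\Phi(n)/2)$ well defined and the equivalence $2\xi_0\le m \Leftrightarrow \Phi(m/2)\ge s/2$ legitimate, is handled correctly.
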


The following theorem is a straightforward corollary to
Lemma~\ref{lemma:relations}.

\begin{theorem}\label{thrm:pseudoredundancies}
  Let $\code$ be a binary linear code.  Then,
  \begin{eqnarray*}
    &\rho_{\maxfrac} (\code) \; \ge \; \rho_{\AWGNC} (\code) \; \ge \; 
    \rho_{\BEC} (\code) \; , \\
    &\rho_{\maxfrac} (\code) \; \ge \; \rho_{\BSC} (\code) \; \ge \; 
    \rho_{\BEC} (\code) \; .
  \end{eqnarray*}
\end{theorem}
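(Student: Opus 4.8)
The plan is to deduce the statement directly from the pointwise inequalities of Lemma~\ref{lemma:relations}, combined with the observation (already recorded above) that each of the four minimum pseudoweights of a given parity-check matrix is bounded above by the minimum distance $D$.

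First I would fix an arbitrary parity-check matrix $\cH$ with $\ker\cH=\code$ and trace through a short implication. Suppose $\weight_{\maxfrac}^{\min}(\cH)=D$. Then Lemma~\ref{lemma:relations} gives $\weight_{\maxfrac}^{\min}(\cH)\le\weight_{\AWGNC}^{\min}(\cH)$, while the general upper bound gives $\weight_{\AWGNC}^{\min}(\cH)\le D$; sandwiching yields $\weight_{\AWGNC}^{\min}(\cH)=D$. Repeating the argument one step further, $\weight_{\AWGNC}^{\min}(\cH)=D$ forces $\weight_{\BEC}^{\min}(\cH)=D$, and the same reasoning along the other branch of Lemma~\ref{lemma:relations} shows $\weight_{\maxfrac}^{\min}(\cH)=D\Rightarrow\weight_{\BSC}^{\min}(\cH)=D\Rightarrow\weight_{\BEC}^{\min}(\cH)=D$.

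Next I would phrase these implications as inclusions of the feasible sets appearing in the definition of the pseudoredundancies. Writing $\cF_{\bullet}\define\{\cH : \ker\cH=\code,\ \weight_{\bullet}^{\min}(\cH)=D\}$, the implications above read $\cF_{\maxfrac}\subseteq\cF_{\AWGNC}\subseteq\cF_{\BEC}$ and $\cF_{\maxfrac}\subseteq\cF_{\BSC}\subseteq\cF_{\BEC}$. Since the infimum of $\#\text{rows}(\cH)$ taken over a smaller set is at least the infimum over a larger set, this immediately gives $\rho_{\maxfrac}(\code)\ge\rho_{\AWGNC}(\code)\ge\rho_{\BEC}(\code)$ and $\rho_{\maxfrac}(\code)\ge\rho_{\BSC}(\code)\ge\rho_{\BEC}(\code)$, as claimed.

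I do not expect a real obstacle: the result is a genuine corollary, and the argument is essentially bookkeeping. The only point deserving a sentence of care is the degenerate case where one of the sets $\cF_{\bullet}$ is empty; then the convention $\inf\varnothing\define\infty$ applies, but the set inclusions above still hold and the monotonicity of the infimum under inclusion remains valid (an empty larger set forces the smaller one to be empty as well), so the inequalities are unaffected.
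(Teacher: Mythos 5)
Your argument is correct and is exactly the intended one: the paper dismisses this as a ``straightforward corollary'' of Lemma~\ref{lemma:relations}, and your proof simply fills in the bookkeeping (sandwiching each minimum pseudoweight against the upper bound $D$, translating the resulting implications into inclusions of feasible sets, and using monotonicity of the infimum, with the $\inf\varnothing=\infty$ convention handled properly). Nothing to add.
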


We note that for \emph{geometrically perfect} codes, a class of codes
defined and characterised in~\cite{Kashyap_decomp}, all four
pseudocodeword redundancies are finite.


\section{Pseudoredundancy of Random Codes} 

We begin with the following lemma. 

\begin{lemma}\label{lemma:awgn}
  For the binary linear code $\code$ of length $n$, let $d$ be the
  minimum distance of the dual code.  Then, the minimum AWGNC
  pseudoweight of $\code$ (with respect to any parity-check matrix
  $\cH$) satisfies
  \begin{equation}\label{eq:pseudo-distance-bound-random}
    \weight_{\AWGNC}^{\min} \le \frac{(n + d - 2)^2}{(d-1)^2 + (n-1)} \; . 
  \end{equation}
\end{lemma}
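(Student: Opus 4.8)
The plan is to construct an explicit pseudocodeword $\bldx \in \cK(\cH)$ whose AWGNC pseudoweight achieves (or falls below) the claimed bound, regardless of which parity-check matrix $\cH$ is used. The natural candidate is a vector of the form $\bldx = (a, a, \dots, a, 1, 1, \dots, 1)$ — that is, with $d-1$ entries equal to some value $a \ge 1$ and the remaining $n-d+1$ entries equal to $1$ — and then to optimise over $a$. To justify that such an $\bldx$ lies in $\cK(\cH)$, I would use the fact that $d$ is the minimum distance of the dual code, so every row of $\cH$ has Hamming weight at least $d$; hence for each check $j \in \cJ$ and each $\ell \in \cI_j$, the set $\cI_j \setminus \{\ell\}$ has size at least $d-1$. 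The key observation is that if the $d-1$ "large" coordinates are chosen to be a fixed set $S$ with $|S| = d-1$, then for any check $j$ and any $\ell \in \cI_j$ we need $x_\ell \le \sum_{i \in \cI_j \setminus \{\ell\}} x_i$; since $|\cI_j \setminus \{\ell\}| \ge d-1$ and the largest any single coordinate can be is $a$ while the sum on the right contains at least $d-1$ terms each $\ge 1$, the inequality $a \le (d-1)\cdot 1$ would suffice in the worst case — but one must be slightly more careful about overlap between $S$ and $\cI_j$.

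More carefully: set all coordinates in a fixed $(d-1)$-subset $S$ equal to $a := d-1$ and all others equal to $1$. For a check $j$ and $\ell \in \cI_j$, the right-hand side $\sum_{i \in \cI_j \setminus \{\ell\}} x_i \ge |\cI_j \setminus \{\ell\}| \ge d-1 = a \ge x_\ell$, using only that each summand is at least $1$. Thus~(\ref{eq:polytope-inequality-1}) holds, and~(\ref{eq:polytope-inequality-2}) is immediate, so $\bldx \in \cK(\cH) \setminus \{\zeros\}$. Now compute: $\sum_i x_i = (d-1)a + (n-d+1) = (d-1)^2 + (n-d+1) = (d-1)^2 + n - d + 1$. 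Hmm, let me recompute to match the stated numerator $(n+d-2)^2$: we want $\sum x_i = n + d - 2$, which forces $(d-1)(a-1) = d-1$, i.e. $a = 2$. So the right choice is $a := 2$: then $\sum_i x_i = 2(d-1) + (n-d+1) = n+d-1$. That gives $n+d-1$, not $n+d-2$; so instead take the $d-1$ large coordinates equal to $2$ but only $d-2$ of the unit coordinates — no. Let me just set it up as: $\bldx$ has support of size $n$, with some coordinates $2$ and the rest $1$; if there are $t$ twos then $\sum x_i = n + t$ and $\sum x_i^2 = n + 3t$, and we need each check-inequality $x_\ell \le \sum_{i\in\cI_j\setminus\{\ell\}} x_i$, which holds as long as $|\cI_j| \ge 3$ whenever $x_\ell = 2$, or more robustly whenever $|\cI_j \setminus\{\ell\}| \ge 2$, i.e. $|\cI_j|\ge 3$; since every row has weight $\ge d$, this is fine for $d \ge 3$. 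To hit the bound exactly one takes $t = d-2$ twos placed arbitrarily: then $\weight_{\AWGNC}(\bldx) = (n+d-2)^2/(n+3(d-2)) = (n+d-2)^2/((d-1)^2 + (n-1) - (d-1)^2 + 3d - 6 + \cdots)$ — I will verify $n + 3(d-2) = (d-1)^2 + (n-1)$ is generally false, so the extremal vector is not all-$1$/all-$2$; rather it is the $(d-1)$-many-$a$, rest-$1$ vector with $a$ optimised, giving $\weight_{\AWGNC}(\bldx) = ((d-1)a + (n-d+1))^2 / ((d-1)a^2 + (n-d+1))$, valid whenever $a \le d-1$, and one checks $a = d-1$ yields exactly $(n+d-2)^2/((d-1)^3 + n-d+1)$; matching the paper's denominator $(d-1)^2 + (n-1)$ then pins down the intended construction, so I would reverse-engineer the exact vector from the target expression.

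The main obstacle is purely bookkeeping: identifying the precise extremal pseudocodeword so that both numerator and denominator match~(\ref{eq:pseudo-distance-bound-random}) after simplification, and verifying that the fundamental-cone inequalities~(\ref{eq:polytope-inequality-1}) hold for that vector using only the hypothesis that every row of $\cH$ has weight $\ge d$ (which is exactly the statement that $d$ is the dual minimum distance). Once the correct vector is in hand, membership in $\cK(\cH)$ follows by the elementary estimate "sum of $\ge d-1$ positive terms each $\ge$ the smallest coordinate dominates any single coordinate," and the pseudoweight bound is then a one-line substitution into the definition of $\weight_{\AWGNC}$. Since $\weight_{\AWGNC}^{\min}(\cH) \le \weight_{\AWGNC}(\bldx)$ for this particular $\bldx$, and the construction works for \emph{every} $\cH$ with $\ker\cH = \code$, the lemma follows. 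I would also remark that the bound is vacuous/trivial when $d$ is small (e.g. $d = 1$ or $2$), and state the result for $d \ge 2$ or $d \ge 3$ as appropriate.
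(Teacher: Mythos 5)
Your overall strategy --- exhibit an explicit vector with a few ``large'' coordinates and the rest equal to $1$, use the fact that every row of $\cH$ is a nonzero dual codeword and hence has weight at least $d$ to verify the fundamental-cone inequalities, and then read off the AWGNC pseudoweight --- is exactly the paper's. But there is a genuine gap: you never identify the correct pseudocodeword, and you explicitly defer doing so (``I would reverse-engineer the exact vector from the target expression''). Every candidate you actually compute with ($d-1$ coordinates equal to $a$, or $t$ coordinates equal to $2$) yields a sum and a sum of squares that fail to match the claimed bound, as your own calculations show. The vector the paper uses is
\[
  \bldx = (d-1,\,1,\,1,\,\dots,\,1)\;,
\]
that is, a \emph{single} coordinate equal to $d-1$ and the remaining $n-1$ coordinates equal to $1$. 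Then $\sum_{i} x_i = (d-1)+(n-1) = n+d-2$ and $\sum_{i} x_i^2 = (d-1)^2+(n-1)$, which gives precisely the right-hand side of~(\ref{eq:pseudo-distance-bound-random}). Membership in $\cK(\cH)$ follows from the estimate you already have: for any $j$ and $\ell\in\cI_j$, the right-hand side of~(\ref{eq:polytope-inequality-1}) is a sum over $|\cI_j|-1\ge d-1$ coordinates, each at least $1$ (and when $\ell$ is the large coordinate, all of them equal $1$), so it is at least $d-1\ge x_\ell$. Your closing remark that the statement should be restricted to $d\ge 2$ is a fair minor caveat that the paper glosses over. With the single-large-coordinate vector inserted, your argument closes; as written, it does not constitute a proof.
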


\begin{proof} 
  Consider the pseudocodeword $\bldx = (x_1, x_2, \dots, x_n) \define
  (d-1, 1, \dots, 1)$.  Since $d$ is the minimum distance of the dual
  code, every row in $\cH$ has weight at least $d$.  Therefore, all
  inequalities~(\ref{eq:polytope-inequality-1})
  and~(\ref{eq:polytope-inequality-2}) are satisfied for this $\bldx$,
  and so it is indeed a legal pseudocodeword. Finally, observe that
  the AWGNC pseudoweight of $\bldx$ is given by the right-hand side of
  (\ref{eq:pseudo-distance-bound-random}).
\end{proof}

Next, we take a random binary linear code $\code$ of a fixed rate $R$
and arbitrary length $n$ (for $n \rightarrow \infty$).  It is well
known that the relative minimum distance $\delta = D/n$ of $\code$
attains, with probability approaching $1$ as $n \rightarrow \infty$,
for any fixed small $\epsilon > 0$, the Gilbert-Varshamov bound
\[ \delta \ge \entropy_2^{-1} (1 - R) - \epsilon \; , \] where
$\entropy^{-1}_2(\cdot)$ is the inverse of the binary entropy function
$\entropy_2 (p) = - p \log_2 p - (1 - p) \log_2 (1 - p)$
(see~\cite[Theorems 4.4, 4.5, and 4.10]{Roth-book} for details).

The dual code $\code\dual$ of $\code$ can be viewed as a random code
also, and so with high probability the rate $R\dual=1-R$ and the
relative minimum distance $\delta\dual = d/n$ of the dual code attain
the Gilbert-Varshamov bound
\[ \delta\dual \ge \mu \define \entropy^{-1}_2 (1 - R\dual) - \epsilon
= \entropy^{-1}_2(R) - \epsilon \; , \] 

Note that~(\ref{eq:pseudo-distance-bound-random}) may be written in
terms of the relative minimum distance $\delta\dual$ of the dual code
as follows:
\begin{equation}
  \weight_{\AWGNC}^{\min} \le 
  \frac{(1 + \delta\dual - 2/n)^2}{(\delta\dual- 1/n)^2 + (1/n-1/n^2)} \; .
  \label{eq:pseudo-distance-new}
\end{equation}

For large $n$, the minimum pseudoweight of the code $\code\dual$ is
bounded from above by the constant $(1 + 1/\delta\dual)^2 + \epsilon'
\le (1+1/\mu)^2 + \epsilon'$ for some small $\epsilon' > 0$ --- this
bound does not depend on $n$. On the other hand, $\code$ is a random
code and so its minimum distance satisfies the Gilbert-Varshamov
bound, namely
\[ D\ge \left( \entropy^{-1}_2 (1 - R) - \epsilon \right) \cdot n \; ,\]
which increases linearly with $n$ for a fixed $R$. 

We obtain that for any $\cH$, there is a gap between the minimum
pseudoweight and the minimum distance of a random code
$\code$. Therefore, we have the following corollary.

\begin{corollary}\label{cor:non-existance_AWGNC}
  Let $0<R<1$ be fixed.  For $n$ large enough, for a random binary
  linear code $\code$ of length $n$ and rate~$R$, there is a gap
  between the minimum AWGNC pseudoweight (with respect to any
  parity-check matrix) and the minimum distance.  Therefore, the AWGNC
  pseudoredundancy is infinite for most codes.
\end{corollary}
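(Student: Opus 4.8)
The plan is to assemble the ingredients already in place in the excerpt into a clean non-existence argument. Fix the rate $0<R<1$. By Lemma~\ref{lemma:awgn}, for \emph{any} parity-check matrix $\cH$ of a code $\code$ of length $n$ whose dual has minimum distance $d$, the minimum AWGNC pseudoweight is at most $(n+d-2)^2/\bigl((d-1)^2+(n-1)\bigr)$. Rewriting this as in~(\ref{eq:pseudo-distance-new}) in terms of $\delta\dual=d/n$, one sees that the right-hand side is at most a constant depending only on $\delta\dual$ (roughly $(1+1/\delta\dual)^2$) plus a vanishing correction. So the first step is: show that this upper bound on $\weight_{\AWGNC}^{\min}(\cH)$ stays bounded by an absolute constant $B=B(R)$ as $n\to\infty$, provided $\delta\dual$ is bounded away from $0$.

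The second step is to supply the probabilistic input. Take $\code$ a random binary linear code of rate $R$ and length $n$. With probability tending to $1$, $\code$ meets the Gilbert-Varshamov bound, so $D\ge (\entropy_2^{-1}(1-R)-\epsilon)\,n$, which grows linearly in $n$. Simultaneously, the dual code $\code\dual$ is (up to the usual care about the distribution of the dual of a random code) also a random code of rate $1-R$, so with high probability $\delta\dual=d/n\ge \mu\define\entropy_2^{-1}(R)-\epsilon>0$, a positive constant independent of $n$. Combining: with high probability, for every parity-check matrix $\cH$ of $\code$,
\[
  \weight_{\AWGNC}^{\min}(\cH)\ \le\ (1+1/\mu)^2+\epsilon'\ <\ (\entropy_2^{-1}(1-R)-\epsilon)\,n\ \le\ D
\]
once $n$ is large enough, since the left side is a constant and the right side grows without bound. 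Hence no parity-check matrix achieves $\weight_{\AWGNC}^{\min}(\cH)=D$, and by the definition of $\rho_{\AWGNC}$ this redundancy is infinite. Since this holds for a random code with probability approaching $1$, it holds for ``most'' codes of rate $R$, for each fixed $R\in(0,1)$.

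The main obstacle is the subtlety in the phrase ``the dual of a random code is a random code''. One must be slightly careful: sampling $\code$ uniformly among $[n,k]$ codes and then passing to $\code\dual$ does give the uniform distribution on $[n,n-k]$ codes, so the Gilbert-Varshamov concentration for $\delta\dual$ does apply — but this needs to be stated rather than glossed over, and one should note that we only need the \emph{lower} bound $\delta\dual\ge\mu$ with high probability, which is exactly what GV gives. Everything else is bookkeeping: choosing $\epsilon,\epsilon'$ small enough, invoking the stated GV theorems from~\cite{Roth-book}, and noting that a union bound over the two high-probability events ($\delta$ large, $\delta\dual\ge\mu$) still leaves probability tending to $1$. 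No optimisation over $\cH$ is needed because Lemma~\ref{lemma:awgn} is uniform in $\cH$; that uniformity is precisely what makes the conclusion about \emph{every} parity-check matrix, and hence about the pseudoredundancy, go through.
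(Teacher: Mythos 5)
Your proposal is correct and follows essentially the same route as the paper: it combines Lemma~\ref{lemma:awgn} (via the rewriting in~(\ref{eq:pseudo-distance-new})) with the Gilbert--Varshamov bound applied to both $\code$ and $\code\dual$ to show the pseudoweight upper bound is a constant while $D$ grows linearly. The only difference is that you make explicit the point that the dual of a uniformly random $[n,k]$ code is uniformly distributed among $[n,n-k]$ codes, which the paper states without elaboration.
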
\medskip

The following lemma is a counterpart of Lemma~\ref{lemma:awgn} for the
BSC.

\begin{lemma}\label{lemma:bsc}
  Let $\code$ be a binary linear code of length $n$, and let $d$ be the
  minimum distance of the dual code.  Then, the minimum BSC
  pseudoweight of $\code$ (with respect to any parity-check matrix
  $\cH$) satisfies
  \[ \weight_{\BSC}^{\min} \le 2 \lceil n/d \rceil \; . \]
\end{lemma}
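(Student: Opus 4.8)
The plan is to mimic the proof of Lemma~\ref{lemma:awgn}: exhibit an explicit legal pseudocodeword $\bldx \in \cK(\cH)$ whose BSC pseudoweight is at most $2\lceil n/d\rceil$, using only the fact that every row of $\cH$ has Hamming weight at least $d$. The natural candidate is a $0/1$ vector, since for such a vector the BSC pseudoweight has a particularly clean form. First I would take $\bldx$ to be the indicator vector of a set $S \subseteq \cI$ with $|S| = \lceil n/d \rceil$; then $\weight_{\BSC}(\bldx) = 2|S| = 2\lceil n/d\rceil$, because for a $0/1$ vector $\bldx'$ is a string of $|S|$ ones followed by zeros, $\Phi$ is the ramp that rises with slope $1$ on $[0,|S|]$ and is flat afterwards, so $\Phi(n) = |S|$ and $\Phi^{-1}(|S|/2) = |S|/2$.

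The crux is choosing $S$ so that the inequalities~(\ref{eq:polytope-inequality-1}) hold. For a $0/1$ vector these inequalities say: for every row $j$ and every $\ell \in \cI_j$, if $x_\ell = 1$ then at least one other coordinate of $\cI_j$ is also $1$, i.e. $|\cI_j \cap S| \ge 2$ whenever $\cI_j \cap S \neq \varnothing$. Equivalently, $S$ must \emph{not} intersect any row support in exactly one position; so we need a set $S$ of size $\lceil n/d\rceil$ that hits each row support either $0$ or $\ge 2$ times. Since every $\cI_j$ has size at least $d$, a counting/pigeonhole argument should give this: the $n$ coordinates can be partitioned into $\lceil n/d\rceil$ blocks, and one might try taking $S$ to consist of one representative from each block — but that does not obviously avoid singleton intersections, so I expect the actual construction to be slightly more careful. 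The cleanest route: because $|\cI_j| \ge d$ for all $j$, one can greedily build $S$ by adding coordinates in pairs, or alternatively argue that a random $S$ of the right size meets each row in $0$ or $\ge 2$ points with positive probability; the key leverage is always "row supports are large, so a small set either misses a support entirely or, once it touches it, has room to touch it twice."

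The main obstacle is precisely this combinatorial step — producing a set of size exactly $\lceil n/d\rceil$ with the "no singleton intersection with any row support" property — and making sure it works uniformly for \emph{every} parity-check matrix $\cH$ of $\code$ (so the bound holds with respect to any $\cH$), not just for one cleverly chosen matrix. Once such an $S$ is in hand, verifying that $\bldx = \ones_S$ lies in $\cK(\cH)$ is immediate from the $0/1$ reformulation of~(\ref{eq:polytope-inequality-1}) and~(\ref{eq:polytope-inequality-2}), and the pseudoweight computation above finishes the proof. A possible simplification, which I would check first, is whether one can bypass the combinatorics entirely by splitting the coordinate set into consecutive blocks of size $d$ (the last possibly shorter) and relating row supports to these blocks; if each row support, being of size $\ge d$, is forced to either avoid a union of chosen blocks or meet it in $\ge 2$ points, the construction becomes trivial.
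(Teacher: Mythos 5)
Your construction cannot be repaired: the combinatorial step you defer is not merely unproven, it is false in general. A $0/1$ vector lies in the fundamental cone $\cK(\cH)$ exactly when its support is a \emph{stopping set} of $\cH$ (each row support met in $0$ or $\ge 2$ positions), so your argument requires every parity-check matrix of $\code$ to admit a stopping set of size about $\lceil n/d\rceil$. But when $\cH$ consists of all nonzero dual codewords, the minimum stopping set size equals the minimum distance $D$ of $\code$ (this is the Schwartz--Vardy result cited in the introduction), and $D$ can vastly exceed $2\lceil n/d\rceil$: for random codes of fixed rate, $D$ grows linearly in $n$ while $n/d$ stays bounded --- which is precisely the regime in which this lemma is applied to prove Corollary~\ref{cor:non-existence_BSC}. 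So no greedy, probabilistic, or block-partition choice of $S$ can succeed, and your proposed ``simplification'' fails for the same reason (row supports need not align with consecutive blocks). There is also a factor-of-two slip: for a $0/1$ vector with support size $s$ one gets $\weight_{\BSC}(\bldx)=2\,\Phi^{-1}(s/2)=s$, not $2s$.

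The paper's proof avoids stopping sets entirely by using a \emph{strictly positive} pseudocodeword: $\tau=\lceil n/d\rceil$ coordinates equal to $d-1$ and the remaining $n-\tau$ coordinates equal to $1$. Since every row support $\cI_j$ has size at least $d$, each inequality~(\ref{eq:polytope-inequality-1}) holds trivially: $x_\ell \le d-1$ while the right-hand side is a sum of at least $d-1$ entries, each at least $1$. Membership in the cone therefore costs nothing, for every $\cH$. The condition $\tau(d-1)\ge n-\tau$, i.e.\ $\tau d\ge n$, guarantees that the first $\tau$ (largest) coordinates already carry at least half of the total mass $\Phi(n)$, so $\Phi^{-1}(\Phi(n)/2)\le\tau$ and $\weight_{\BSC}(\bldx)\le 2\tau=2\lceil n/d\rceil$. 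The lesson is that for the BSC (and AWGNC) one should perturb an all-positive vector rather than seek sparse $0/1$ pseudocodewords, whose existence is governed by the BEC quantity (stopping distance) and behaves very differently.
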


\begin{proof}
Consider the pseudocodeword 
\[ \bldx = (x_1, x_2, \dots, x_n) \define (\underbrace{d-1, d-1,
  \dots, d-1}_\tau, \underbrace{1, 1, \dots, 1}_{n-\tau}) \; , \] 
for some positive integer $\tau$. This $\bldx$ is then a legal
pseudocodeword; since $d$ is the minimum distance of the dual code,
every row in $\cH$ has a weight of at least $d$, and so, all
inequalities~(\ref{eq:polytope-inequality-1})
and~(\ref{eq:polytope-inequality-2}) are satisfied by this $\bldx$.

If $\tau(d-1) \ge n - \tau$ then by the definition of the BSC
pseudoweight $\weight_{\BSC} (\bldx) \le 2 \tau$. This condition is
equivalent to $\tau d \ge n$. Therefore, we set $\tau = \lceil n/d
\rceil$. For the corresponding $\bldx$, the pseudoweight is less or
equal to $2\tau = 2 \lceil n/d \rceil$.
\end{proof}

Similarly to the AWGNC case, let $\code$ be a random binary code of
length $n$ and a fixed rate $R$.  The parameters $R\dual$
and~$\delta\dual$ of its dual code $\code\dual$ attain with high
probability the Gilbert-Varshamov bound $\delta\dual\ge\mu$.

From Lemma~\ref{lemma:bsc}, for all $n$, the pseudoweight of the
code~$\code\dual$ is bounded from above by 
\[2 \lceil n/d \rceil < 2/\delta\dual + 2 \le 2/\mu + 2 \; ,\]
which is a constant.  On the other hand, $\code$ is a random code and
its minimum distance also satisfies the Gilbert-Varshamov bound, so it
increases linearly with $n$.  It follows that for any $\cH$, there is
a gap between the minimum BSC pseudoweight and the minimum distance
of a random code $\code$.

\begin{corollary}\label{cor:non-existence_BSC}
  Let $0<R<1$ be fixed.  For $n$ large enough, for a random binary
  linear code $\code$ of length $n$ and rate~$R$, there is a gap
  between the minimum BSC pseudoweight (with respect to any
  parity-check matrix) and the minimum distance.  Therefore, the BSC
  pseudoredundancy is infinite for most codes.
\end{corollary}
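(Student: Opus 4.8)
The plan is to read the corollary off from the two displayed facts established immediately before the statement, so the proof is essentially a bookkeeping argument combining Lemma~\ref{lemma:bsc} with the Gilbert--Varshamov behaviour of random codes and their duals.

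First I would fix $0<R<1$ and let $\code$ be a uniformly random binary linear code of length $n$ and rate $R$. The key observation is that the duality map $\code\mapsto\code\dual$ is a bijection between codes of rate $R$ and codes of rate $R\dual=1-R$, so $\code\dual$ is itself a uniformly random code of rate $1-R$; hence the events ``$\code$ attains the GV bound'' and ``$\code\dual$ attains the GV bound'' each occur with probability tending to $1$, and by a union bound they occur simultaneously with probability tending to $1$. On that event, writing $d$ for the minimum distance of $\code\dual$, one has $d\ge\mu n$ with $\mu=\entropy^{-1}_2(R)-\epsilon>0$ a positive constant, while $D\ge(\entropy^{-1}_2(1-R)-\epsilon)\,n$ grows linearly in $n$.

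Next I would invoke Lemma~\ref{lemma:bsc}, which is uniform over all parity-check matrices $\cH$ of $\code$: it yields $\weight_{\BSC}^{\min}(\cH)\le 2\lceil n/d\rceil<2n/d+2\le 2/\mu+2$, a constant independent of both $n$ and $\cH$. Comparing this with $D$, which grows without bound, I conclude that for all sufficiently large $n$ one has $\weight_{\BSC}^{\min}(\cH)<D$ for \emph{every} parity-check matrix $\cH$ of $\code$; hence no $\cH$ satisfies $\weight_{\BSC}^{\min}(\cH)=D$, and by definition $\rho_{\BSC}(\code)=\infty$. Since the event on which this reasoning is valid has probability tending to $1$, the fraction of $[n,k]$ codes with infinite BSC pseudoredundancy tends to $1$, i.e.\ the BSC pseudoredundancy is infinite for most codes.

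I do not expect any substantive obstacle, since all of the analytic content is already packaged in Lemma~\ref{lemma:bsc} (whose proof exhibits an explicit low-BSC-pseudoweight pseudocodeword valid for every $\cH$) and in the standard GV estimates. The only points requiring a moment of care are the passage from ``$\code$ is random'' to ``$\code\dual$ is also random'', which is what lets me treat the dual minimum distance $d$ as growing linearly, and the observation that the bound of Lemma~\ref{lemma:bsc} is a \emph{universal} upper bound over all parity-check matrices --- that universality is precisely what converts ``the gap cannot be closed by any $\cH$'' into ``$\rho_{\BSC}(\code)=\infty$''.
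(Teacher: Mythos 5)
Your proposal is correct and follows essentially the same route as the paper: it combines Lemma~\ref{lemma:bsc} with the Gilbert--Varshamov bounds for the code and its dual, concluding that the minimum BSC pseudoweight is bounded by the constant $2/\mu+2$ uniformly over all parity-check matrices while $D$ grows linearly in $n$. Your explicit remarks on the duality bijection, the union bound, and the universality of the lemma over all $\cH$ merely make precise what the paper leaves implicit.
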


The last corollary disproves the conjecture in~\cite{Kelley-Sridhara}
that the BSC pseudoredundancy is finite for all binary linear codes.

\begin{example}
  Consider the [23,12] Golay code having minimum distance $D=7$. The
  minimum distance of its dual code is $d=8$.  We can take a
  pseudocodeword $\bldx$ as in the proof of Lemma~\ref{lemma:bsc} with
  $\tau = 3$.  We have $\weight_{\BSC} (\bldx) \le 2 \tau = 6$, thus
  obtaining that the minimum distance is not equal to the minimum
  pseudoweight.

  Similarly, for the [24,12] extended Golay code we have $D=d=8$, and
  by taking $\tau=3$ we obtain $\weight_{\BSC} (\bldx) \le 2\tau = 6$.

  Note however that the presented techniques do not answer the
  question of whether these Golay codes have finite AWGNC
  pseudoredundancy.
\end{example}

\section{Codes Based on Designs}

\begin{definition}\label{prop:design_incidence_matrix}
  A \emph{partial $(w_c,\lambda)$ design} is a block design consisting
  of an $n$-element set $\cV$ (whose elements are called
  \emph{points}) and a collection of $m$ subsets of $\cV$ (called
  \emph{blocks}) such that every point is contained in exactly $w_c$
  blocks and every $2$-element subset of $\cV$ is contained in at most
  $\lambda$ blocks. The \emph{incidence matrix} of a design is an $m
  \times n$ matrix $\cH$ whose rows correspond to the blocks and whose
  columns correspond to the points, and satisfies $H_{j,i} = 1$ if
  block $j$ contains point~$i$, and $H_{j,i} = 0$ otherwise.

  If each block contains the same number $w_r$ of points and every
  $2$-element subset of $\cV$ is contained in exactly $\lambda$
  blocks, the design is said to be an $(n,w_r,\lambda)$ \emph{balanced
    incomplete block design} (BIBD).
\end{definition}


Note that for a BIBD we have $n w_c = m w_r$ and also
\begin{equation}\label{eq:design_constraint}
  w_c(w_r-1) = \lambda(n-1) \; ,
\end{equation}
so all other parameters may be deduced from $(n,w_r,\lambda)$; in
particular, $w_c=\frac{n-1}{w_r-1}\,\lambda$.  Note that
\cite{vasic_milenkovic} and \cite{kashyap_vardy} consider parity-check
matrices based on BIBDs; these matrices are the transpose of the
incidence matrices defined here.

We have the following general result for codes based on partial
$(w_c,\lambda)$ designs.

\begin{theorem}\label{thm:pseudoweight_bound}
  Let $\code$ be a code with parity-check matrix $\cH$, such that a
  subset of the rows of $\cH$ forms the incidence matrix for a partial
  $(w_c,\lambda)$ design. Then the minimum max-fractional weight of
  $\code$ with respect to $\cH$ is lower bounded by
  \begin{equation}
    \weight_{\maxfrac}^{\min} \ge 1 + \frac{w_c}{\lambda} \; .
    \label{eq:new_lower_bound}
  \end{equation}
  For the case of an $(n,w_r,\lambda)$ BIBD, the lower bound in
  (\ref{eq:new_lower_bound}) may also be written as $1 +
  \frac{n-1}{w_r-1}$; the alternative form follows trivially from
  (\ref{eq:design_constraint}).
\end{theorem}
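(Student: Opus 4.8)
The plan is to bound $\weight_{\maxfrac}(\bldx)$ from below for an arbitrary nonzero $\bldx\in\cK(\cH)$ by using only the inequalities coming from the rows that form the partial design. Let $\cJ_0\subseteq\cJ$ index those rows, so that for each $j\in\cJ_0$ the support $\cI_j$ has size $w_r$ (in the BIBD case; in general $|\cI_j|$ need not be constant, but each point lies in exactly $w_c$ of these blocks and each pair of points in at most $\lambda$). After scaling, assume $\max_i x_i = 1$, and pick $\ell\in\cI$ with $x_\ell=1$. The max-fractional weight is then exactly $\sum_{i\in\cI}x_i$, so it suffices to show $\sum_{i\in\cI}x_i\ge 1+w_c/\lambda$.

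The key step is a double-counting argument over the $w_c$ design-blocks containing $\ell$. For each such block $j$ (there are exactly $w_c$ of them since $\ell$ lies in $w_c$ blocks of the partial design), inequality~(\ref{eq:polytope-inequality-1}) applied with the distinguished coordinate $\ell$ gives $x_\ell \le \sum_{i\in\cI_j\setminus\{\ell\}}x_i$, i.e.\ $\sum_{i\in\cI_j\setminus\{\ell\}}x_i\ge 1$. Summing this over all $w_c$ blocks $j$ containing $\ell$ yields
\[
  \sum_{j\ni\ell}\ \sum_{i\in\cI_j\setminus\{\ell\}} x_i \ \ge\ w_c \; .
\]
Now reverse the order of summation: the left-hand side equals $\sum_{i\neq\ell} c_i\,x_i$, where $c_i$ is the number of design-blocks containing both $\ell$ and $i$. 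By the defining property of a partial $(w_c,\lambda)$ design, $c_i\le\lambda$ for every $i\neq\ell$; combined with $x_i\ge 0$ from~(\ref{eq:polytope-inequality-2}), we get $\sum_{i\neq\ell}c_i x_i\le \lambda\sum_{i\neq\ell}x_i$. Hence $\lambda\sum_{i\neq\ell}x_i\ge w_c$, so $\sum_{i\neq\ell}x_i\ge w_c/\lambda$, and therefore
\[
  \weight_{\maxfrac}(\bldx)\ =\ \sum_{i\in\cI}x_i\ =\ x_\ell + \sum_{i\neq\ell}x_i\ \ge\ 1 + \frac{w_c}{\lambda}\;,
\]
as claimed. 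Taking the minimum over nonzero $\bldx$ gives~(\ref{eq:new_lower_bound}). The BIBD reformulation is then immediate from $w_c=\frac{n-1}{w_r-1}\lambda$ in~(\ref{eq:design_constraint}).

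The only subtle point — and the step I would be most careful about — is the direction of the inequality $c_i\le\lambda$: it is essential that the design property bounds the pair-coincidence count from \emph{above}, which is exactly what makes the upper estimate $\sum c_i x_i\le\lambda\sum x_i$ valid and yields a \emph{lower} bound on the weight; a balanced design with equality $c_i=\lambda$ gives the tightest case. One should also note that extra (non-design) rows of $\cH$ only add further constraints to $\cK(\cH)$, hence can only increase $\weight_{\maxfrac}^{\min}$, so restricting attention to $\cJ_0$ is legitimate. No genuine obstacle beyond bookkeeping is expected.
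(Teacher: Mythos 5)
Your proof is correct and follows essentially the same route as the paper: the paper likewise picks a maximal coordinate $x_\ell$, sums the fundamental cone inequalities over the $w_c$ design blocks containing $\ell$, and uses the pair-coincidence bound $\lambda$ to obtain $w_c\, x_\ell \le \lambda \sum_{i\ne\ell} x_i$. You have merely made explicit the double-counting and the inequality direction $c_i\le\lambda$ that the paper leaves implicit in the phrase ``sum inequalities.''
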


\begin{proof}
  Consider the subset of the rows of $\cH$ which forms the incidence
  matrix for a partial $(w_c,\lambda)$ design.  Let $\bldx$ be a
  nonzero pseudocodeword and let $x_\ell$ be a maximal coordinate of
  $\bldx$ ($\ell \in \cI$).  For all $j \in \cJ$ such that $\ell \in
  \cI_j$, sum inequalities (\ref{eq:polytope-inequality-1}). We have
  \begin{equation*}
    w_c \cdot x_{\ell} \le \lambda \cdot 
    \sum_{i \in \cI \backslash \{ \ell \} } x_i \; ,
  \end{equation*}
  or 
  \begin{equation}\label{eq:basic_step_for_bound_bsc} 
    \left( 1 + \frac{w_c}{\lambda} \right) x_{\ell} \le 
    \sum_{i \in \cI} x_i \; . 
  \end{equation}
  The result now easily follows from the definition of
  $\weight_{\maxfrac}^{\min}$.
\end{proof}

\begin{theorem}\label{thm:pseudoweight_bound_2}
  Let $\code$ be a code with parity-check matrix $\cH$, such that a
  subset of the rows of $\cH$ forms the incidence matrix for a
  partial $(w_c,\lambda)$ design. Then,
  \begin{align*}
    \weight_{\AWGNC}^{\min} & \ge 1 + \frac{w_c}{\lambda} \; , \\
    \weight_{\BSC}^{\min} & \ge 1 + \frac{w_c}{\lambda} \; . 
  \end{align*}
\end{theorem}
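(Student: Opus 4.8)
The plan is to leverage Theorem~\ref{thm:pseudoweight_bound} together with Lemma~\ref{lemma:relations}, since both the AWGNC and BSC minimum pseudoweights dominate the max-fractional minimum pseudoweight. Concretely, Lemma~\ref{lemma:relations} gives $\weight_{\maxfrac}^{\min}(\cH) \le \weight_{\AWGNC}^{\min}(\cH)$ and $\weight_{\maxfrac}^{\min}(\cH) \le \weight_{\BSC}^{\min}(\cH)$, and Theorem~\ref{thm:pseudoweight_bound} already establishes $\weight_{\maxfrac}^{\min}(\cH) \ge 1 + w_c/\lambda$ under exactly the same hypothesis on $\cH$. Chaining these two facts immediately yields both displayed inequalities, so in principle the proof is a one-line deduction.

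First I would restate the hypothesis to make clear that the same parity-check matrix $\cH$ is under consideration in both invocations (so that the fundamental cone $\cK(\cH)$, and hence all three minimum pseudoweights, refer to a common object). Then I would cite Theorem~\ref{thm:pseudoweight_bound} to get $\weight_{\maxfrac}^{\min}(\cH) \ge 1 + w_c/\lambda$, and apply the relevant chain of inequalities from Lemma~\ref{lemma:relations} to transfer this lower bound first to $\weight_{\AWGNC}^{\min}(\cH)$ and then to $\weight_{\BSC}^{\min}(\cH)$. For completeness I might remark that in the BIBD case the bound again rewrites as $1 + \frac{n-1}{w_r-1}$ via~(\ref{eq:design_constraint}), exactly as in the previous theorem.

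There is essentially no obstacle here: the real content was already done in the max-fractional case, where the averaging argument over the $w_c$ inequalities incident to a maximal coordinate $x_\ell$ produces~(\ref{eq:basic_step_for_bound_bsc}). The only thing to be careful about is that Lemma~\ref{lemma:relations} is stated for a fixed parity-check matrix, so one must not accidentally mix minimum pseudoweights with respect to different matrices; but since the hypothesis fixes a single $\cH$ this is automatic. If one wanted a self-contained proof not relying on Lemma~\ref{lemma:relations}, one could instead directly bound $\weight_{\AWGNC}(\bldx)$ and $\weight_{\BSC}(\bldx)$ from below using~(\ref{eq:basic_step_for_bound_bsc}): for AWGNC, note $\weight_{\AWGNC}(\bldx) = (\sum_i x_i)^2 / \sum_i x_i^2 \ge (\sum_i x_i)^2 / (x_\ell \sum_i x_i) = (\sum_i x_i)/x_\ell \ge 1 + w_c/\lambda$, using $x_i \le x_\ell$ for all $i$; and for BSC one argues similarly via the definition of $\weight_{\BSC}$, which is likewise at least $\weight_{\maxfrac}(\bldx)$. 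But the cleanest route is simply to quote the two earlier results.

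\begin{proof}
  Fix the parity-check matrix $\cH$ as in the statement, so that a
  subset of its rows forms the incidence matrix of a partial
  $(w_c,\lambda)$ design.  By Theorem~\ref{thm:pseudoweight_bound},
  \[ \weight_{\maxfrac}^{\min}(\cH) \ge 1 + \frac{w_c}{\lambda} \; . \]
  On the other hand, Lemma~\ref{lemma:relations} gives
  $\weight_{\maxfrac}^{\min}(\cH) \le \weight_{\AWGNC}^{\min}(\cH)$ and
  $\weight_{\maxfrac}^{\min}(\cH) \le \weight_{\BSC}^{\min}(\cH)$.
  Combining these inequalities yields
  \[ \weight_{\AWGNC}^{\min} \ge 1 + \frac{w_c}{\lambda}
     \quad\text{and}\quad
     \weight_{\BSC}^{\min} \ge 1 + \frac{w_c}{\lambda} \; , \]
  as claimed.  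In the case of an $(n,w_r,\lambda)$ BIBD, the bound
  may again be written as $1 + \frac{n-1}{w_r-1}$ by
  (\ref{eq:design_constraint}).
\end{proof}
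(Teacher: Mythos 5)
Your proof is correct and is exactly the paper's argument: the paper also deduces the theorem by combining the lower bound on $\weight_{\maxfrac}^{\min}$ from Theorem~\ref{thm:pseudoweight_bound} with the inequalities $\weight_{\maxfrac}^{\min}(\cH) \le \weight_{\AWGNC}^{\min}(\cH)$ and $\weight_{\maxfrac}^{\min}(\cH) \le \weight_{\BSC}^{\min}(\cH)$ from Lemma~\ref{lemma:relations}. Your additional remarks (fixing a single $\cH$, and the optional direct bound on $\weight_{\AWGNC}$) are sound but not needed.
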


The proof follows from Lemma~\ref{lemma:relations} and
Theorem~\ref{thm:pseudoweight_bound}.\medskip

Another tool for proving lower bounds of the minimum AWGNC pseudoweight
is provided by the following eigenvalue-based bound from
\cite{KV-lower-bounds}.

\begin{proposition}\label{prop:KV_bound}
  The minimum AWGNC pseudoweight for a $(w_c,w_r)$-regular
  parity-check matrix $\cH$ whose corresponding Tanner graph is
  connected is bounded below by
  \begin{equation}\label{eq:KV_bound}
    \weight_{\AWGNC}^{\min} \ge n \cdot \frac{ 2w_c - \mu_2 } 
    {\mu_1 - \mu_2 } \; ,
  \end{equation}
  where $\mu_1$ and $\mu_2$ denote the largest and second largest
  eigenvalue (respectively) of the matrix $\cL \define \cH^T \cH$,
  considered as a matrix over the real numbers.
\end{proposition}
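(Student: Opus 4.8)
The plan is to lower-bound $\weight_{\AWGNC}(\bldx)$ for an arbitrary nonzero pseudocodeword $\bldx\in\cK(\cH)$ by decomposing $\bldx$ along the eigenspaces of $\cL=\cH^T\cH$ and combining a Rayleigh-quotient estimate for $\cL$ with the fundamental-cone inequalities; taking the minimum over all such $\bldx$ then yields~(\ref{eq:KV_bound}). First some spectral preliminaries, all over $\rr$. The matrix $\cL$ is symmetric, positive semidefinite, and has nonnegative entries, and each of its row sums equals $w_c w_r$: indeed $L_{i,i'}=\sum_{j\in\cJ}H_{j,i}H_{j,i'}$, so $\sum_{i'\in\cI}L_{i,i'}=\sum_{j\in\cJ}H_{j,i}\,w_r=w_c w_r$ by the column- and row-regularity of $\cH$. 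Hence $\ones$ is an eigenvector of $\cL$ with eigenvalue $w_c w_r$, and by the standard bound on the spectral radius of a nonnegative matrix with constant row sums (together with positive semidefiniteness of $\cL$) we get $\mu_1=w_c w_r$ with $\ones$ in the $\mu_1$-eigenspace. Finally, connectedness of the Tanner graph of $\cH$ is equivalent to connectedness of the graph on $\cI$ that joins $i$ and $i'$ whenever some check is incident to both, i.e.\ to irreducibility of $\cL$; by Perron--Frobenius $\mu_1$ is then simple, so $\mu_1>\mu_2$, and $\cL$ admits an orthonormal eigenbasis whose first member is $\ones/\sqrt{n}$.

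Now fix a nonzero $\bldx\in\cK(\cH)$ and set $\sigma_j\define\sum_{i\in\cI_j}x_i$. For each $j\in\cJ$ and each $\ell\in\cI_j$, inequality~(\ref{eq:polytope-inequality-1}) reads $x_\ell\le\sigma_j-x_\ell$, i.e.\ $2x_\ell\le\sigma_j$; multiplying by $x_\ell$ and summing over $\ell\in\cI_j$ gives $\sum_{i\in\cI_j}x_i^2\le\tfrac12\sigma_j^2$, and summing this over $j\in\cJ$ (each column being counted $w_c$ times) yields
\begin{equation*}
  \bldx\cL\bldx^T \;=\; \|\cH\bldx^T\|^2 \;=\; \sum_{j\in\cJ}\sigma_j^2 \;\ge\; 2w_c\sum_{i\in\cI}x_i^2 \;=\; 2w_c\,(\bldx\bldx^T) \;.
\end{equation*}
On the other hand, write $\bldx=\alpha\ones+\bldz$ with $\bldz\ones^T=0$. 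Then $\alpha n=\sum_{i\in\cI}x_i>0$ (as $\bldx$ is nonnegative and nonzero), $\bldx\bldx^T=\alpha^2 n+\bldz\bldz^T$, and using the eigenbasis above,
\begin{equation*}
  \bldx\cL\bldx^T \;=\; \mu_1\alpha^2 n+\bldz\cL\bldz^T \;\le\; \mu_1\alpha^2 n+\mu_2\,(\bldz\bldz^T) \;.
\end{equation*}
Comparing the two displays gives $(2w_c-\mu_2)\,(\bldz\bldz^T)\le(\mu_1-2w_c)\,\alpha^2 n$.

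It remains to rearrange this. If $2w_c-\mu_2\le0$ then the right-hand side of~(\ref{eq:KV_bound}) is nonpositive and there is nothing to prove, so assume $2w_c-\mu_2>0$. Adding $(2w_c-\mu_2)\alpha^2 n$ to both sides of the last inequality turns it into $(2w_c-\mu_2)(\alpha^2 n+\bldz\bldz^T)\le(\mu_1-\mu_2)\alpha^2 n$; dividing by the positive quantities $\alpha^2 n+\bldz\bldz^T$ and $\mu_1-\mu_2$ and multiplying by $n$ gives exactly
\begin{equation*}
  \weight_{\AWGNC}(\bldx) \;=\; \frac{\bigl(\sum_{i\in\cI}x_i\bigr)^2}{\sum_{i\in\cI}x_i^2} \;=\; \frac{\alpha^2 n^2}{\alpha^2 n+\bldz\bldz^T} \;\ge\; n\cdot\frac{2w_c-\mu_2}{\mu_1-\mu_2} \;,
\end{equation*}
and since $\bldx$ was arbitrary, $\weight_{\AWGNC}^{\min}(\cH)$ obeys the same bound.

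The one delicate ingredient is the spectral bookkeeping of the first paragraph: the connectedness hypothesis must be spent precisely to invoke Perron--Frobenius and conclude that $\mu_1=w_c w_r$ is a \emph{simple} eigenvalue --- hence $\mu_1>\mu_2$, without which the ratio in~(\ref{eq:KV_bound}) need not even be defined --- and one must choose the orthonormal eigenbasis to contain $\ones/\sqrt{n}$ so that the Rayleigh estimate $\bldz\cL\bldz^T\le\mu_2\,(\bldz\bldz^T)$ for $\bldz\perp\ones$ is legitimate. The fundamental-cone inequality $2x_\ell\le\sigma_j$, the identity $\bldx\cL\bldx^T=\sum_{j}\sigma_j^2$, and the closing algebra are all routine.
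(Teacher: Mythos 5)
Your proof is correct. The paper itself does not prove Proposition~\ref{prop:KV_bound} --- it imports the result from \cite{KV-lower-bounds} --- and your argument is a faithful reconstruction of the standard proof there: the cone inequalities give $\bldx\cL\bldx^T=\sum_j\sigma_j^2\ge 2w_c\,\bldx\bldx^T$, the orthogonal decomposition along $\ones$ (justified by regularity and, via Perron--Frobenius on the connected Tanner graph, the simplicity of $\mu_1=w_cw_r$) gives the matching upper bound $\mu_1\alpha^2n+\mu_2\,\bldz\bldz^T$, and the rearrangement is exactly as you carry it out.
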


In the case where $\cH$ is equal to the incidence matrix for an
$(n,w_r,\lambda)$ BIBD, it is easy to check that the bound of
Proposition \ref{prop:KV_bound} becomes




\[\weight^{\min}_{\AWGNC} \ge 1 + \frac{ w_c }{ \lambda } \; . \]
We conclude that in this case the bound of
Proposition~\ref{prop:KV_bound} coincides with that of Theorem
\ref{thm:pseudoweight_bound_2} (for the case of the AWGNC
only).\medskip

Note that the pseudoweight bounds of~\cite{Smarandache_PG_EG} for the
EG$(2,q)$ and PG$(2,q)$ codes for $q=2^s \ge 2$ follow from
Theorem~\ref{thm:pseudoweight_bound_2}.  We next apply the bounds of
Theorems~\ref{thm:pseudoweight_bound}
and~\ref{thm:pseudoweight_bound_2} to some other examples of codes
derived from designs.

\begin{proposition}\label{prop:Hamming}
  For $m \ge 2$, the $[2^m-1,2^m-1-m,3]$ Hamming code has AWGNC, BSC,
  and max-fractional pseudocodeword redundancies $\rho(\code) \le
  2^m-1$.
\end{proposition}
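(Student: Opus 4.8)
The plan is to exhibit an explicit parity-check matrix $\cH$ for the $[2^m-1,2^m-1-m,3]$ Hamming code whose rows can be arranged so that a subset of them forms the incidence matrix of a suitable partial $(w_c,\lambda)$ design, and then invoke Theorem~\ref{thm:pseudoweight_bound} and Theorem~\ref{thm:pseudoweight_bound_2} to conclude that $\weight_{\maxfrac}^{\min}$, $\weight_{\AWGNC}^{\min}$, and $\weight_{\BSC}^{\min}$ are all at least $1 + w_c/\lambda$. Since the minimum distance is $D = 3$, it suffices to make $1 + w_c/\lambda \ge 3$, i.e.\ $w_c \ge 2\lambda$; because all three minimum pseudoweights are always $\le D$, equality $=3$ then follows, giving a valid matrix and hence $\rho(\code) \le \#\text{rows}(\cH)$.

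First I would take $\cH$ to be the matrix whose rows are \emph{all} nonzero vectors $\bldh \in \ff_2^{2^m-1}$ of Hamming weight... more precisely, the natural candidate with $2^m-1$ rows is the following. Index the $n = 2^m-1$ columns by the nonzero elements of $\ff_{2^m}$ (equivalently by the points of the projective geometry $\mathrm{PG}(m-1,2)$). For each nonzero $\bldv \in \ff_2^m$, form the row that is the incidence vector of the hyperplane $\{x : \langle \bldv, x\rangle = 0\}$ of $\mathrm{PG}(m-1,2)$; there are exactly $2^m-1$ such hyperplanes, giving a $(2^m-1)\times(2^m-1)$ matrix. Each hyperplane contains $2^{m-1}-1$ points, each point lies on $2^{m-1}-1$ hyperplanes, and any two distinct points lie on exactly $2^{m-2}-1$ common hyperplanes; so this matrix is precisely the incidence matrix of the $2$-design $\mathrm{PG}(m-1,2)$ with $w_c = 2^{m-1}-1$ and $\lambda = 2^{m-2}-1$. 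One checks that $\cH$ has rank $m$ over $\ff_2$ and its kernel is the Hamming code (each standard parity-check row is one of these hyperplane incidence vectors, or a linear combination realized among them). Then $1 + w_c/\lambda = 1 + (2^{m-1}-1)/(2^{m-2}-1)$, and for $m \ge 3$ this is $> 3$ (for $m=2$, $\lambda=0$ and the design is degenerate, so that case needs the direct observation that the $3\times 3$ all-pairs matrix already gives minimum pseudoweight $3$, which equals $D$). Applying Theorems~\ref{thm:pseudoweight_bound} and~\ref{thm:pseudoweight_bound_2} to this subset of rows (here, all of $\cH$) yields $\weight^{\min} \ge 1 + w_c/\lambda \ge 3 = D$ in all three metrics, and combined with the universal upper bound $\weight^{\min} \le D$ we get equality; hence $\rho(\code) \le 2^m-1$ for each of the AWGNC, BSC, and max-fractional pseudoredundancies.

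The step I expect to require the most care is verifying the two-part claim about $\cH$: that the $2^m-1$ hyperplane incidence vectors of $\mathrm{PG}(m-1,2)$ genuinely span the dual Hamming code (i.e.\ $\ker\cH$ equals the Hamming code, not something smaller), and that the design parameters $w_c = 2^{m-1}-1$, $\lambda = 2^{m-2}-1$ are correct including the edge behavior at $m=2,3$. The spanning claim is clean once one notes that the columns, indexed by nonzero $x \in \ff_{2^m} \cong \ff_2^m$, are exactly the columns of the standard $m\times(2^m-1)$ Hamming check matrix after transposing the roles, so the rowspace of $\cH$ has dimension $m$ and $\ker\cH$ has dimension $2^m-1-m$; since $\ker\cH$ clearly contains the Hamming code (every codeword is orthogonal to every $\bldv$), dimensions force equality. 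The design-parameter counts are the standard projective-geometry subspace counts and present no real obstacle, but I would state them explicitly and handle $m=2$ separately as indicated above. Finally, one should note the bound is achieved already without adding \emph{redundant} rows beyond these $2^m-1$; if a tighter count is desired one could ask whether a proper subset of the hyperplanes still forms a partial $(w_c,\lambda)$ design with $w_c \ge 2\lambda$, but that refinement is not needed for the stated inequality $\rho(\code)\le 2^m-1$.
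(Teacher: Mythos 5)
Your overall strategy (exhibit a parity-check matrix that is the incidence matrix of a design with $1+w_c/\lambda\ge 3$, apply Theorem~\ref{thm:pseudoweight_bound} together with the upper bound $\weight^{\min}\le D$, then transfer to AWGNC/BSC) is exactly the paper's. But the specific matrix you choose does not work: the incidence vectors of the hyperplanes $\{x:\langle\bldv,x\rangle=0\}$ of $\mathrm{PG}(m-1,2)$ are \emph{not} codewords of the dual (simplex) code. The simplex codeword associated to $\bldv$ is the indicator of $\{x:\langle\bldv,x\rangle=1\}$, of weight $2^{m-1}$; your rows are its complements, of weight $2^{m-1}-1$. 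Concretely, for a Hamming codeword $\bldc$ the inner product with your row for $\bldv$ equals $\sum_x c_x(1+\langle\bldv,x\rangle)=\mathrm{wt}(\bldc)\bmod 2$, which is $1$ for the odd-weight codewords (e.g.\ weight $3$). So the Hamming code is \emph{not} contained in $\ker\cH$; the row space of your $\cH$ is the simplex code together with the all-ones vector, of dimension $m+1$, and $\ker\cH$ is the even-weight subcode of the Hamming code. Your rank claim ($\mathrm{rank}\,\cH=m$) and your ``dimensions force equality'' step are therefore false. A sanity check already signals the problem: your bound $1+(2^{m-1}-1)/(2^{m-2}-1)$ exceeds $3=D$ for $m\ge 3$, which would contradict $\weight^{\min}_{\maxfrac}\le D$ if $\cH$ really were a parity-check matrix of the Hamming code (it is consistent only because $\ker\cH$ is in fact the even subcode, with minimum distance $4$).

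The repair is the paper's choice: take the rows of $\cH$ to be the $2^m-1$ \emph{nonzero simplex codewords themselves} (the complements of your hyperplanes). These form the incidence matrix of a BIBD with $(n,w_r,\lambda)=(2^m-1,2^{m-1},2^{m-2})$, hence $w_c=2^{m-1}$ and $1+w_c/\lambda=3$ exactly, with no degenerate case at $m=2$; the kernel is the Hamming code since the rows span the full dual code. Everything else in your argument (invoking Theorems~\ref{thm:pseudoweight_bound} and~\ref{thm:pseudoweight_bound_2}, or equivalently Theorem~\ref{thrm:pseudoredundancies}, and counting $2^m-1$ rows) then goes through verbatim.
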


\begin{proof}
  For $m \ge 2$, consider the binary parity-check matrix $\cH$ whose
  rows are exactly the nonzero codewords of the dual code
  $\code\dual$, in this case the $[2^m-1,m,2^{m-1}]$ simplex
  code. This $\cH$ is the incidence matrix for a BIBD with parameters
  $(n,w_r,\lambda)$ = $(2^m-1, 2^{m-1},
  2^{m-2})$. Theorem~\ref{thm:pseudoweight_bound} gives
  $\weight_{\maxfrac} (\bldx) \ge 3$, leading to $\rho_{\maxfrac}
  (\code) \le 2^m-1$.  The result for AWGNC and BSC follows by
  applying Theorem~\ref{thrm:pseudoredundancies}.
\end{proof}

In the next example, we consider simplex codes. Straightforward
application of the previous reasoning does not lead to the desired
result.  However, more careful selection of the matrix $\cH$, as
described below, leads to a new bound on the pseudoredundancy.

\begin{proposition}\label{prop:Simplex}
  For $m \ge 2$, the $[2^m-1,m,2^{m-1}]$ simplex code has AWGNC, BSC,
  and max-fractional pseudocodeword redundancies
  \begin{eqnarray*}
    \rho(\code) \le \frac{(2^m-1)(2^{m-1}-1)}{3} \; . 
  \end{eqnarray*}
\end{proposition}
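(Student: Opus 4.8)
The plan is to construct a parity-check matrix $\cH$ for the $[2^m-1,m,2^{m-1}]$ simplex code $\code$ whose rows form (a subset of the rows forming) the incidence matrix of a partial $(w_c,\lambda)$ design with $1+w_c/\lambda \ge 2^{m-1}$, and then invoke Theorems~\ref{thm:pseudoweight_bound}, \ref{thm:pseudoweight_bound_2} and~\ref{thrm:pseudoredundancies} exactly as in Proposition~\ref{prop:Hamming}. The dual of the simplex code is the $[2^m-1,2^m-1-m,3]$ Hamming code, so a valid parity-check matrix for $\code$ is any matrix whose row space is the Hamming code; the natural candidate is to take as rows all the minimum-weight codewords of the Hamming code, i.e.\ all weight-$3$ codewords. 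Each such codeword corresponds to a triple of coordinates (equivalently, a line through three points of $\mathrm{PG}(m-1,2)$, or a triple $\{i,j,k\}$ with $i+j+k = 0$ in $\ff_2^m\setminus\{0\}$); there are $\binom{2^m-1}{3}/\big/\text{(overcount)}$ of them, and a standard count gives exactly $\frac{(2^m-1)(2^{m-1}-1)}{3}$ weight-$3$ codewords. This matches the claimed bound, so $\#\mathrm{rows}(\cH) = \frac{(2^m-1)(2^m-1)(2^{m-1}-1)}{3}$... wait — it matches it exactly, which is the first consistency check.

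Next I would verify that these rows do span the full Hamming code (so that $\ker\cH = \code$ indeed): this is classical, since the weight-$3$ codewords of the Hamming code generate it (the Hamming code is spanned by its minimum-weight words — this follows, e.g., because it is a cyclic code generated by a primitive polynomial, or directly from the $\mathrm{PG}$ description). Then I would check the design parameters: every point (coordinate) $i$ lies in exactly $w_c = 2^{m-1}-1$ blocks, since the number of triples $\{i,j,k\}$ with $i+j+k=0$ containing a fixed $i$ equals the number of unordered pairs $\{j,k\}$ with $j+k = i$, $j,k \ne 0$, $j\ne k$, which is $(2^m-2)/2 = 2^{m-1}-1$. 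Every $2$-element subset $\{i,j\}$ lies in exactly $\lambda = 1$ block, namely the unique triple completed by $k = i+j$. Hence we have a BIBD — in fact the $\mathrm{PG}(m-1,2)$ point-line design — with $(w_c,\lambda) = (2^{m-1}-1, 1)$, and Theorem~\ref{thm:pseudoweight_bound} yields $\weight_{\maxfrac}^{\min}(\cH) \ge 1 + w_c/\lambda = 2^{m-1} = D$. Since $\weight_{\maxfrac}^{\min} \le D$ always, equality holds, so $\rho_{\maxfrac}(\code) \le \#\mathrm{rows}(\cH) = \frac{(2^m-1)(2^{m-1}-1)}{3}$, and the AWGNC and BSC bounds follow from Theorem~\ref{thrm:pseudoredundancies}.

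The main obstacle — the reason the "straightforward application of the previous reasoning does not lead to the desired result," as the text warns — is that the obvious first attempt (mimicking Proposition~\ref{prop:Hamming} by taking the rows of $\cH$ to be \emph{all} nonzero dual codewords) fails here: the dual is the Hamming code, whose full set of $2^{2^m-1-m}-1$ nonzero codewords does not form a design with a large enough $w_c/\lambda$ ratio, and moreover there are exponentially many of them, giving a far worse bound. So the key idea is the more careful selection: restrict to the weight-$3$ layer only. The two things that genuinely need care are (i) confirming the weight-$3$ codewords still span the whole Hamming code (so $\cH$ is a legitimate parity-check matrix, not one for a subcode), and (ii) the combinatorial counts for $w_c$, $\lambda$, and the total number of blocks — all of which are elementary finite-geometry computations over $\ff_2^m$, and none of which I expect to present obstacles beyond bookkeeping.
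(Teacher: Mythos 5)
Your proposal is correct and follows essentially the same route as the paper: take $\cH$ to be the matrix of all $(2^m-1)(2^{m-1}-1)/3$ weight-$3$ codewords of the dual Hamming code, observe that this is the incidence matrix of a BIBD with $(n,w_r,\lambda)=(2^m-1,3,1)$ (hence $w_c=2^{m-1}-1$), apply Theorem~\ref{thm:pseudoweight_bound} to get $\weight_{\maxfrac}^{\min}\ge 2^{m-1}=D$, and transfer to the AWGNC and BSC via Theorem~\ref{thrm:pseudoredundancies}. The only cosmetic difference is the justification that the weight-$3$ codewords span the whole Hamming code: the paper invokes Simonis's theorem together with the uniqueness of the Hamming code, while you appeal to the same classical fact directly.
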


\begin{proof}
  For $m \ge 2$, consider the binary parity-check matrix $\cH$ whose
  rows are exactly the codewords of the dual code $\code\dual$ (in
  this case the $[2^m-1,2^m-1-m,3]$ Hamming code) with Hamming weight
  equal to $3$. This $\cH$ is the incidence matrix for a BIBD with
  parameters $(n,w_r,\lambda)$ = $(2^m-1, 3, 1)$.
  Theorem~\ref{thm:pseudoweight_bound} gives
  $\weight^{\min}_{\maxfrac} \ge 2^{m-1}$.

  Note that the number of codewords of weight $3$ in the
  $[2^m-1,2^m-1-m,3]$ Hamming code is $(2^m-1)(2^{m-1}-1)/3$. This is
  due to the fact that there exists a $3:1$ mapping from all vectors
  of length $2^m-1$ and weight $2$ onto the codewords of weight 3.

  Next, we justify the claim that $\cH$ is the parity-check matrix of
  $\code$.  A theorem of Simonis~\cite{Simonis} states that if there
  exists a linear $[n,k,D]$ code then there also exists a linear
  $[n,k,D]$ code whose codewords are spanned by the codewords of
  weight $D$.  Since the Hamming code is unique for the parameters
  $[2^m-1,2^m-1-m,3]$, this implies that the Hamming code itself is
  spanned by the codewords of weight $3$, so the rowspace of $\cH$
  equals $\code$.

  
  The result for AWGNC and BSC follows again by applying
  Theorem~\ref{thrm:pseudoredundancies}.
\end{proof}

We remark that the bounds of Propositions~\ref{prop:Hamming}
and~\ref{prop:Simplex} are sharp at least for the case $m=3$ and the
max-fractional weight, see Section~\ref{sec:exp-b}.

The following proposition proves that the AWGNC, BSC, and max-fractional
pseudocodeword redundancies are finite for all codes $\code$ with
minimum distance at most $3$.

\begin{proposition}\label{prop:D3_codes}
  Let $\code$ be a $[n,k,D]$ code with $D\leq 3$.  Then
  $\rho_{\maxfrac}(\code)$ is finite. Moreover, we have
  $\rho_{\maxfrac}(\code) = n-k$ in the case $D \leq 2$.
\end{proposition}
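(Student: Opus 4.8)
The plan is to handle the two parts separately, since the case $D \le 2$ is easy and the case $D = 3$ requires the design-based machinery of Theorem~\ref{thm:pseudoweight_bound}. For $D \le 2$, recall that every pseudocodeword redundancy satisfies $\rho(\code) \ge n-k$; so it suffices to exhibit \emph{some} parity-check matrix $\cH$ with $n-k$ rows for which $\weight_{\maxfrac}^{\min}(\cH) = D$. Take any full-rank $\cH$ (so it has exactly $n-k$ rows). If $D = 1$ there is a zero column, so the unit vector supported on that coordinate lies in $\cK(\cH)$ and has max-fractional weight $1$; combined with the general upper bound $\weight_{\maxfrac}^{\min}(\cH) \le D = 1$, we get equality. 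If $D = 2$ there are two equal columns (indices $i_1, i_2$), and the vector with $x_{i_1} = x_{i_2} = 1$ and all other entries $0$ satisfies \eqref{eq:polytope-inequality-1}--\eqref{eq:polytope-inequality-2}, giving a pseudocodeword of max-fractional weight $2$; again equality with $D$ follows. Hence $\rho_{\maxfrac}(\code) = n-k$ when $D \le 2$.

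For the case $D = 3$, the idea is to build a parity-check matrix a subset of whose rows forms a partial $(w_c,\lambda)$ design with $w_c/\lambda \ge 2$, so that Theorem~\ref{thm:pseudoweight_bound} forces $\weight_{\maxfrac}^{\min} \ge 1 + w_c/\lambda \ge 3$; since the minimum distance caps the minimum pseudoweight at $3$, this yields equality and hence finiteness. The natural choice is $\lambda = 1$: I would collect \emph{all} codewords of the dual code $\code\dual$ of weight exactly $3$ (as a multiset of $3$-element subsets of the coordinate set). Because $D = 3$ there is no dual word of weight $1$ or $2$, so any two coordinates lie in at most one such triple — this is exactly the ``at most $\lambda = 1$'' condition. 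One then checks that every coordinate is covered the \emph{same} number $w_c$ of times: by a transitivity/symmetry argument on the dual code, or more simply by noting that the count of weight-$3$ dual words through a fixed coordinate $i$ is determined by the weight enumerator of the shortened/punctured dual code, which is independent of $i$ when $D\ge 3$ (no lighter dual words interfere). Appending these rows to any full-rank parity-check matrix of $\code$ gives an $\cH$ with $\ker\cH = \code$ satisfying the hypothesis of Theorem~\ref{thm:pseudoweight_bound} with $w_c/\lambda = w_c \ge 2$ (one must verify $w_c \ge 2$, i.e.\ that each coordinate lies in at least two weight-$3$ dual words — this uses $D = 3$, so that $\code\dual$ is not too sparse).

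The main obstacle is precisely establishing that the weight-$3$ dual codewords form a genuine partial design with a well-defined $w_c$ and with $w_c \ge 2$. The ``$\lambda \le 1$'' part is immediate from $D = 3$; the subtle point is the \emph{regularity} $w_c$ of the point degrees and the lower bound $w_c \ge 2$. I would argue this by a counting/combinatorial-design argument on $\code\dual$: for instance, for any coordinate $i$, the weight-$3$ dual words containing $i$ correspond bijectively to weight-$2$ vectors in a coset structure of the punctured dual code, and a double-counting over all pairs $\{i,j\}$ shows the degree is constant. If uniformity genuinely fails for some codes, a fallback is to drop the ``every point covered equally'' requirement and instead argue directly: the step \eqref{eq:basic_step_for_bound_bsc} in the proof of Theorem~\ref{thm:pseudoweight_bound} only needs, for the maximal coordinate $x_\ell$, that $\ell$ lies in at least two rows of weight $3$ whose pairwise intersections (apart from $\ell$) are disjoint, which one can always arrange because $D = 3$ guarantees enough weight-$3$ dual words through every coordinate. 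Either way, the conclusion is a finite parity-check matrix achieving $\weight_{\maxfrac}^{\min} = 3 = D$, proving $\rho_{\maxfrac}(\code) < \infty$.
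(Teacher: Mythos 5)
Your argument for the case $D \le 2$ proves the wrong inequality. To get $\weight_{\maxfrac}^{\min}(\cH) = D$ you must show that \emph{every} nonzero pseudocodeword has max-fractional weight at least $D$; exhibiting one pseudocodeword of weight exactly $D$ only re-establishes the upper bound $\weight_{\maxfrac}^{\min}(\cH) \le D$, which holds automatically. For $D=1$ the required lower bound is trivial (since $\sum_{i\in\cI} x_i \ge \max_{i\in\cI} x_i$), but for $D=2$ it needs an argument: since $D\ge 2$, the matrix $\cH$ has no zero column, so the maximal coordinate $x_\ell$ occurs in some inequality (\ref{eq:polytope-inequality-1}), giving $2x_\ell \le \sum_{i\in\cI} x_i$ and hence $\weight_{\maxfrac}(\bldx)\ge 2$ for every nonzero pseudocodeword. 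This is the step the paper uses and your proposal omits.

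The case $D=3$ is where the proposal genuinely breaks down. You take the weight-$3$ codewords of $\code\dual$ as blocks, but $D=3$ is the minimum distance of $\code$, not of $\code\dual$, and it gives no control over the weight-$3$ dual words. In fact they may not exist at all: for the $[7,4,3]$ Hamming code the dual is the $[7,3,4]$ simplex code, all of whose nonzero words have weight $4$, so your block set is empty and both the main construction and the proposed fallback fail. (Likewise, the claim that ``$D=3$ implies there is no dual word of weight $1$ or $2$'' conflates $D$ with the dual distance $d$.) The paper's proof avoids all of this by taking $\cH$ to consist of \emph{all} $2^{n-k}$ codewords of $\code\dual$: since $\code$ has minimum distance $3$, this matrix is an orthogonal array of strength $2$, so with $r=n-k$ every column has weight $2^{r-1}$ and every pair of columns is jointly covered exactly $2^{r-2}$ times; that is, $\cH$ is the incidence matrix of a partial $(w_c,\lambda)=(2^{r-1},2^{r-2})$ design with $1+w_c/\lambda=3$, and Theorem~\ref{thm:pseudoweight_bound} applies. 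The orthogonal-array property of the full dual code is the key idea your proposal is missing.
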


\begin{proof}
  First assume that $D\le 2$.  Let $\cH$ be any parity-check matrix
  for the code $\code$, let $\bldx$ be a nonzero pseudocodeword, and
  assume that $x_\ell$ is a maximal entry in $\bldx$ (for some $\ell
  \in \cI$).  We always have $\sum_{i\in\cI} x_i \ge x_\ell$ and
  hence $\weight_{\maxfrac} (\bldx) \geq 1$.
  
  Therefore, we may assume $D=2$.  Note that for such a code, $\cH$
  has no zero column and thus we may write by
  (\ref{eq:polytope-inequality-1})
  \[ x_\ell \le \sum_{i \in \cI \backslash \{ \ell \}} x_i \; ,
  \quad\text{ or }\quad 2 x_\ell \le \sum_{i \in \cI} x_i \; . \]

  From the definition of max-fractional weight, we obtain that
  $\weight_{\maxfrac} (\bldx) \ge 2$.  Choosing a parity-check matrix
  for $\code$ with $n-k$ rows we have that $\rho_{\maxfrac}(\code) =
  n-k$. From Theorem~\ref{thrm:pseudoredundancies},
  $\rho_{\AWGNC}(\code) = n-k$ and $\rho_{\BSC}(\code) = n-k$.

  Next, consider a code with minimum distance $D = 3$. Denote by $\cH$
  the parity-check matrix whose rows consist of \emph{all} codewords
  of the dual code of $\code$. Note that for a code of minimum
  distance $D$, a parity-check matrix $\cH$ consisting of all rows of
  the dual code $\code\dual$ is an orthogonal array of strength
  $D-1$. In the present case $D=3$, and this implies that in any pair
  of columns of $\cH$, all length-$2$ binary vectors occur with equal
  multiplicities (c.f. \cite[p. 139]{MacWilliams_Sloane}). Thus the
  matrix $\cH$ is an incidence matrix for a partial block design with
  parameters $(w_c,\lambda) = (2^{r-1},2^{r-2})$, where $r=n-k$.
  Therefore for this matrix $\cH$ the code has minimum (AWGNC, BSC, or
  max-fractional) pseudoweight at least $1 + w_c/\lambda = 3$, and it
  follows that the pseudocodeword redundancy is finite for any code
  with $D=3$.
\end{proof}



\section{Some Experimental Results}

\subsection{Cyclic Codes Meeting the Eigenvalue Bound}

We consider cyclic codes of length $n$ with full circulant
parity-check matrix $\cH$.  Thus $\cH=(H_{j,i})_{i,j \in \cI}$ is a
square matrix with entries $H_{j,i}=c_{j-i}$ for some vector $\bldc$
of length $n$, where all the indices are modulo $n$.  This $n\times n$
matrix is then $w$-regular (i.e.\ $(w,w)$-regular), where
$w=\sum_{i\in\cI}c_i$, so we may use the eigenvalue-based lower bound
in Proposition~\ref{prop:KV_bound} to examine the AWGNC pseudocodeword
redundancy.

For the largest eigenvalue of the matrix $\cL = \cH^T \cH$ we have
$\mu_1=w^2$, since every row weight of $\cL$ equals
$\sum_{i,j\in\cI}c_ic_j=w^2$.  Consequently, the eigenvalue bound is
\[ 
\weight_{\AWGNC}^{\min} \geq n\,\frac{2w-\mu_2}{w^2-\mu_2} \; ,
\] 
where $\mu_2$ is the second largest eigenvalue of $\cL$.


We carried out an exhaustive search on all cyclic codes $\code$ up to
length $n\leq 250$ and computed the eigenvalue bound in all cases
where the Tanner graph of the full circulant parity-check matrix is
connected.\footnote{This computation was done by a self-written C
  program.}  Table~\ref{table:KV_bound-1} gives a complete list of all
cases in which the eigenvalue bound equals the minimum Hamming
distance $D$ and $D\ge 3$.  In particular, the AWGNC pseudoweight
equals the minimum Hamming distance in these cases as well and thus we
have for the pseudocodeword redundancy $\rho_{\AWGNC}(\code)\leq n$.

\begin{table}
  \caption{Cyclic codes up to length 250 
    with $D\geq 3$ meeting the eigenvalue bound}
  \begin{tabular}{ccl}
    parameters & $w$-regular & comments \\\hline
    $[n,1,n]$ & $2$ & repetition code, $n=3\dots 250$ \\
    $[n,n-m,3]$ & $2^{m-1}$ & Hamming code, $n=2^m-1$, $m=3\dots 7$ \\
    $[7,3,4]$ & $3$ & dual of the $[7,4,3]$ Hamming code \\
    $[15,7,5]$ & $4$ & Euclidean geometry code EG(2,4) \\
    $[21,11,6]$ & $5$ & projective geometry code PG(2,4) \\
    $[63,37,9]$ & $8$ & Euclidean geometry code EG(2,8) \\
    $[73,45,10]$ & $9$ & projective geometry code PG(2,8) \\
  \end{tabular}
  \label{table:KV_bound-1}
\end{table}



\subsection{The Pseudocodeword Redundancy for Codes of Small
  Length}\label{sec:exp-b}

Let $\code$ be a binary linear code with parameters $[n,k,D]$ and let
$r=n-k$.  Two parity-check matrices $\cH$ and $\cH'$ of $\code$ are
called \emph{equivalent} if $\cH$ can be transformed into $\cH'$ by a
sequence of row and column permutations.  In this case,
$\weight^{\min}(\cH) = \weight^{\min}(\cH')$.



We computed the AWGNC, BSC, and max-fractional pseudocodeword
redundancies for all codes up to length 9.  Note that for this also
all possible parity-check matrices (up to equivalence) had to be
examined.\footnote{The enumeration of codes and parity-check matrices
  was done by self-written C programs.  The minimum pseudoweight for
  the various parity-check matrices was computed by using Maple 12 and
  the Convex package by Matthias Franz, available at
  http://www-fourier.ujf-grenoble.fr/\~{}franz/convex/.}  The main
observations are the following:

\begin{itemize}
\item If $D\geq 3$ then for \emph{every} parity-check matrix $\cH$ we
  have $\weight^{\min}_{\AWGNC}(\cH)\geq 3$.  This is not true for the BSC.
\item If $k=2$, then $\rho_{\AWGNC}(\code) = \rho_{\BSC}(\code) =
  \rho_{\maxfrac}(\code) = r$.
\item For the $[7,4,3]$ Hamming code $\code$ we have
  $\rho_{\AWGNC}(\code)=r=3$, $\rho_{\BSC}(\code)=4$, and
  $\rho_{\maxfrac}(\code)=7$.
\item For the $[7,3,4]$ simplex code $\code$ we have
  $\rho_{\AWGNC}(\code)=r=4$, $\rho_{\BSC}(\code)=5$, and
  $\rho_{\maxfrac}(\code)=7$.  There is (up to equivalence) only one
  parity-check matrix $\cH$ with $\weight_{\AWGNC}^{\min}(\cH)=4$,
  namely
  \[ \cH = \mat{
    \ze&\ze&\zo&\ze&\zo&\zo&\zo\\
    \zo&\ze&\ze&\zo&\ze&\zo&\zo\\
    \zo&\zo&\ze&\ze&\zo&\ze&\zo\\
    \zo&\zo&\zo&\ze&\ze&\zo&\ze\\
  }\;. \] It is the only parity-check matrix with constant row weight
  3.
\item For the $[8,4,4]$ extended Hamming code $\code$ we have
  $\rho_{\AWGNC}(\code)=5$, $\rho_{\BSC}(\code)=6$, and
  $\rho_{\maxfrac}(\code)=\infty$.  This code $\code$ is the shortest
  one such that $\rho_{\AWGNC}(\code)>r$, and also the shortest one
  such that $\rho_{\maxfrac}(\code)=\infty$.
\end{itemize}


\section*{Acknowledgements}

The authors would like to thank Nigel Boston, Christine Kelley, and
Pascal Vontobel for helpful discussions.  This work was supported in
part by the Science Foundation Ireland (Claude Shannon Institute for
Discrete Mathematics, Coding and Cryptography, Grant 06/MI/006, and
Principal Investigator Award, Grant 08/IN.1/I1950). The work of Vitaly
Skachek was done in part while he was with the Claude Shannon
Institute, University College Dublin. His work was also supported in
part by the National Research Foundation of Singapore (Research Grant
NRF-CRP2-2007-03).


\begin{thebibliography}{99}
  
\bibitem{FKKR}
  {G. D. Forney, R. Koetter, F. R. Kschischang, and A. Reznik,}
  {``On the effective weights of pseudocodewords for codes defined 
    on graphs with cycles,''}
  vol. 123 of {\em Codes, Systems, and Graphical Models,} 
  IMA Vol. Math. Appl., ch. 5, pp.~101-112,
  Springer, 2001. 

\bibitem{kou_lin_fossorier}
  {Y. Kou, S. Lin, and M. P. C. Fossorier,} 
  {\sl Low density parity check codes based on finite geometries: 
    {A} rediscovery and new results,} 
  {\em IEEE Trans. Inform. Theory}, vol. 47, no. 7, pp. 2711--2736, Nov. 2001.

\bibitem{Schwartz-Vardy}
  {M. Schwartz and A. Vardy,}
  {``On the stopping distance and the stopping redundancy of codes,''}
  {\em IEEE Trans. Inform. Theory,} vol. 52, no. 3, pp. 922--932, March 2006.

\bibitem{KV-long-paper}
  {P. O. Vontobel, R. Koetter,}
  {``Graph-Cover Decoding and Finite-Length Analysis of Message-Passing 
    Iterative Decoding of LDPC Codes,''}
  submitted to {\em IEEE Trans. Inform. Theory}, December 2005,
  Arxiv report arXiv:cs.IT/0512078, Dec. 2005.

\bibitem{Kashyap_decomp}
  {N. Kashyap,} 
  {``A decomposition theory for binary linear codes,''} 
  {\em IEEE Trans. Inform. Theory}, vol. 54, no. 7, pp. 3035--3058, July 2008.

\bibitem{Roth-book}
  {R. M. Roth,}
  {``Introduction to Coding Theory,''}
  Cambridge University Press, Cambridge, United Kingdom, 2006. 

\bibitem{Kelley-Sridhara}
  {C. Kelley and D. Sridhara,} 
  {``On the pseudocodeword weight and parity-check matrix redundancy of 
    linear codes,''}
  {\em Proc. of the 2007 IEEE Information Theory Workshop (ITW)}, 
  Sept. 2 - 6, 2007, Lake Tahoe, USA.

\bibitem{KV-characterization}
  {R. Koetter, W.-C. W. Li, P. O. Vontobel, and J. L. Walker,}
  {``Characterizations of pseudo-codewords of LDPC codes,''}
  Arxiv report arXiv:cs.IT/0508049, Aug. 2005.

\bibitem{vasic_milenkovic}
  {B. Vasic and O. Milenkovic,} 
  {\sl Combinatorial constructions of low-density parity-check codes for 
    iterative decoding,} 
  {\em IEEE Trans. Inform. Theory}, vol. 50, no. 6, pp. 1156--1176, June 2004.

\bibitem{kashyap_vardy}
  {N. Kashyap and A. Vardy,}
  {\sl Stopping Sets in Codes from Designs,}
  {\em Proc. IEEE International Symposium on Information Theory (ISIT)},
  June/July 2003, p.122, Yokohama, Japan.

\bibitem{KV-lower-bounds}
  {P. O. Vontobel and R. Koetter,}
  {``Lower Bounds on the Minimum Pseudo-Weight of Linear Codes,''}
  {\em Proc. IEEE International Symposium on Information Theory (ISIT)}, 
  June/July 2004, p.67, Chicago, USA. 

\bibitem{Smarandache_PG_EG}
  {R. Smarandache and P. O. Vontobel,} 
  {``Pseudo-codeword analysis of Tanner graphs from projective and 
    {E}uclidean planes,''} 
  {\em IEEE Trans. Inform. Theory}, vol. 53, no. 7, pp. 2376--2393, 
  July 2007.


\bibitem{Simonis}
  {J. Simonis,}
  {``On generator matrices of codes,''}
  {\em IEEE Trans. Inform. Theory}, vol. 38, no. 2, p. 516, 
  March 1992.

\bibitem{MacWilliams_Sloane}
  {F. J. MacWilliams and N. J. A. Sloane,}
  {``The Theory of Error-Correcting Codes,''}
  Amsterdam, The Netherlands; North-Holland, 1978.

\end{thebibliography}
\end{document}